\newtheorem{theorem}{Theorem}[section]
\newtheorem{corollary}[theorem]{Corollary}
\newenvironment{proof}[1][Proof]{\begin{trivlist}
\item[\hskip \labelsep {\bfseries #1}]}{\end{trivlist}}
\newenvironment{definition}[1][Definition]{\begin{trivlist}
\item[\hskip \labelsep {\bfseries #1}]}{\end{trivlist}}
\newcommand{\qed}{\nobreak \ifvmode \relax \else
      \ifdim\lastskip<1.5em \hskip-\lastskip
      \hskip1.5em plus0em minus0.5em \fi \nobreak
      \vrule height0.75em width0.5em depth0.25em\fi}
\let\oldsqrt\sqrt
\def\sqrt{\mathpalette\DHLhksqrt}
\def\DHLhksqrt#1#2{%
\setbox0=\hbox{$#1\oldsqrt{#2\,}$}\dimen0=\ht0
\advance\dimen0-0.2\ht0
\setbox2=\hbox{\vrule height\ht0 depth -\dimen0}%
{\box0\lower0.4pt\box2}}
\newcommand{\la}{\langle}
\newcommand{\ra}{\rangle}
\begin{document}

\title{A Geometrical Derivation of \\a Family of Quantum Speed Limit Results}
\author{Benjamin Russell, Susan Stepney}
\affil{Department of Computer Science, University of York, UK, Y010 5DD}
\date{\today}

\maketitle

\begin{abstract}
We derive a family of quantum speed limit results in time independent systems with pure states and a finite dimensional state space,
by using a geometric method based on right invariant action functionals on $SU(N)$.
The method relates speed limits for implementing quantum gates to bounds on orthogonality times.
We reproduce the known result of the Margolus Levitin theorem, and a known generalisation of the Margolis-Levitin theorem, as special cases of our method, which produces a rich family of other similar speed limit formulas corresponding to positive homogeneous functions on $\mathfrak{su}(n)$.
We discuss the general relationship between speed limits for controlling a quantum state and a system's time evolution operator. 
\end{abstract}

\section{Problem and Motivation}

There is much interest in the speed limit to QIP tasks; a range of perspectives can be found in \cite{ACAR, NFINQSL, MLB, OCQSL, QCOCU}.
There is also recent interest in applications of geometry to such issues \cite{Brody, GSL, CGSL}.

The first of the central problems concerning the quantum speed limit (QSL) is how quickly the time evolution operator of a given system can be driven to a certain desired operator $\hat{O}$.
This problem is the topic of many papers, notably \cite{ACAR, NFINQSL, me1, me2, Meier, bro2}, which apply techniques from differential geometry.

The second central problem is that of controlling a system's state $|\psi_0 \ra$ so that it reaches a desired state $|\psi_1\ra$ in the fastest possible time.
This question has a special instance for orthogonal states, $\la \psi_1 |\psi_0\ra =0$.
This case has attracted much attention since shortly after the dawn of quantum theory \cite{MLB, PB}.
More recently it has gained new attention due to its relevance to computer science and the physical limits to computation \cite{Lloyd, mar}.
Lloyd \cite{Lloyd} provides a short discussion of the importance of orthogonal states (in the standard inner product of $\mathbb{C}^N$) to assessing the capacity of a physical system for QIP:
there are pairs of (pure) states which are distinguishable with a single quantum measurement, at least in principle.

Additional interesting work on both these problems, hinting at the mathematical connection between them, can be found in \cite{ACAR}.
Here we focus on the relationship between the two problems in time independent systems, and on showing that well known bounds like the Margolus-Levitin theorem \cite{MLB} can be thought of as speed limits on $SU(N)$.
This view is clarifying and unifies the two seemingly separate issues:
that of controlling $\hat{U}_t$ and of orthogonalising a given state.

The structure of this paper is as follows.
First we review existing QSL results (\S\ref{sec:orthog}).
In \S\ref{sec:sun} we derive a speed limit theorem for positive homogeneous functions on $\mathfrak{su}(n)$.
Then we apply this theorem to derive a family of limits for orthogonality times in \S\ref{sec:deriv}, and show that the ML bound and Time-Energy uncertainty relation are special cases of this family.
In \S\ref{sec:connect} we discuss the connection between such speed limits for the two central problems.

\section{Review of Orthogonality Times}
\label{sec:orthog}

We briefly review two well known bounds on orthogonality times, the Margolus-Levitin theorem and the Time-Energy uncertainty relation.

\subsection{The Margolus-Levitin Theorem}
The Margolus-Levitin theorem \cite{MLB} (ML theorem) states that for any quantum system such that
\begin{enumerate}
 \item the Hamiltonian $\hat{H}$ is non-degenerate and has a discrete spectrum
 \item the state $|\psi \ra$ has energy expectation $\overline{E} = \la \psi | \hat{H} | \psi \ra$
\end{enumerate}
then $t_{\bot}$, the fastest possible time in which the system can transition from $|\psi \ra$ to a state orthogonal to $|\psi \ra$, is bounded by:
\begin{align}
 t_{\bot} \geq \frac{\pi \hbar}{2 (\overline{E} - E_0)}
\end{align}
where $E_0$ is the lowest eigenvalue of $\hat{H}$.
$t_{\bot}$ is also referred to as a passage time \cite{Brody}.

A fully rigorous presentation on the ML theorem requires a technical discussion about the Hamiltonian being densely defined and bounded below.
The exact nature of the spectrum also needs to be restricted since the term ``eigenvalue'' may not  make sense in the case of continuous spectra (e.g. a single free particle).
However, all the work in this paper is exclusively concerned with finite dimensional systems, and thus such technicalities are avoided.
We intend to pursue future work in that direction in order to assess the computational capacities of continuous variables systems and to compare them to finite dimensional ones using geometric methods.

\subsection{Time-Energy Uncertainty}

The Time-Energy `Uncertainty Relation' states:
\begin{align}
 t_{\bot} \geq \frac{\pi \hbar}{2 \Delta E}
\end{align}
where $\Delta E$ is the variance of energy, or `energy uncertainty', given by
\begin{align}
\Delta E = \sqrt{\la \psi |\hat{H}^2 | \psi \ra - \la \psi |\hat{H} |\psi \ra^2}.
\end{align}

A detailed discussion of this relation can be found in \cite{PB}.
Here it suffices to say that the interpretation in terms of orthogonal states is similar to the interpretation of the ML theorem.

\section{Homogeneous Functions on $\mathfrak{su}(n)$ and the QSL}
\label{sec:sun}
In this section we obtain a speed limit formula for each positive homogeneous function (PH function) on $\mathfrak{su}(n)$.
First we need some definitions.

\subsection{Definitions: Homogeneous Functions}

\begin{definition}[Positive Homogenous Function]
Given a vector space $V$ over $\mathbb{R}$, a function $F:V \rightarrow \mathbb{R}$ is called \emph{positive homogeneous} if $\forall v \in V, \forall \lambda \in \mathbb{R}, \lambda > 0$:
\begin{align}
 F(\lambda v) = \lambda F(v)
\end{align}
\end{definition}
Note that
a positive homogeneous function is not necessarily a positive {\it function};
the positivity refers to $\lambda$, not $F$.

\begin{definition}[Absolutely Homogenous Function]
Given a vector space $V$ over $\mathbb{R}$, a function $F:V \rightarrow \mathbb{R}$ is called \emph{absolutely homogeneous} if $\forall v \in V, \forall \lambda \in \mathbb{R}$:
\begin{align}
 F(\lambda v) = |\lambda| F(v)
\end{align}
\end{definition}
All norms are absolutely homogeneous by definition.
Being absolutely homogeneous is a stronger requirement: 
all absolutely homogeneous functions are also positive homogeneous.
So the results derived below for PH functions apply for norms.

\subsection{Speed Limits}

\subsubsection{Right invariant action functional on $SU(N)$}

Given a PH function on $\mathfrak{su}(n)$ one can define a right invariant action functional on $SU(N)$ by right extension \cite{bump}.
Let $F$ be a PH function on $\mathfrak{su}(N)$ and define its right extension $F_{\hat{U}}: T_{\hat{U}}SU(N) \rightarrow \mathbb{R}$ by
\begin{align}
 F_{\hat{U}}(\hat{A}) := F(\hat{A} \hat{U}^{\dagger})
\end{align}
This is also known as the canonical lift of right translation from $SU(n)$ to $TSU(n)$.

Every tangent vector $\hat{A} \in T_{\hat{U}}SU(N)$ can be expressed as $\hat{A} = \hat{B} \hat{U}$ for some $\hat{B} \in \mathfrak{su}(N)$.
This is essentially the existence of a right trivialisation of the tangent bundle for a matrix Lie group, $TSU(n) \cong \mathfrak{su}(n) \times SU(n)$, which exists as all Lie groups are parallelisable \cite[\S 5.1]{LiePar}.

Exploiting the right trivialisation proves helpful later, as does the following observation which holds for right invariant $F$:
\begin{align}
 F_{\hat{U}}(\hat{A}\hat{U}) = F(\hat{A} \hat{U}^{\dagger} \hat{U}) = F(\hat{A})
\end{align}
That is, all functions that are the right extension of a function on $\mathfrak{su}(N)$ are right invariant.
In fact, right extension can yield all such right invariant functions on $TSU(N)$, which can be readily confirmed.

Given any right invariant PH function on $TSU(N)$, it is possible to define an action functional $S$ for curves $\hat{U}_t$ on $SU(N)$:
\begin{align}
 S[\hat{U}_t] = \int_{0}^{T} F_{\hat{U}_t} \left(\frac{d \hat{U}_t}{dt} \right) dt
\end{align}
This action functional is itself right invariant.

Consider the case where $\hat{U}_t$ solves the Schr{\"o}dinger equation
\begin{align}\label{eqn:schro}
 \frac{d \hat{U}_t}{dt} = -i\hat{H}_t \hat{U}_t
\end{align}
for a potentially time dependent Hamiltonian $\hat{H}_t$.
Then a right invariant action can be expressed as:
\begin{align}
 S[\hat{U}_t] & = \int_{0}^{T} F_{\hat{U}_t} \left(\frac{d \hat{U}_t}{dt} \right) dt = \int_{0}^{T} F_{\hat{U}_t} \left(-i\hat{H}_t \hat{U}_t \right) dt = \int_{0}^{T} F \left(-i\hat{H}_t \hat{U}_t \hat{U}_t^{\dagger} \right) dt = \int_{0}^{T} F \left(-i\hat{H}_t \right) dt
\end{align}
In the case that $\hat{H}_t$ is not time dependent, that is $\hat{H}_t = \hat{H}$ for all time, this reduces to:
\begin{align}
 S[\hat{U}_t] & = \int_{0}^{T} F \left(-i\hat{H}_t \right) dt \nonumber = \int_{0}^{T} F \left(-i\hat{H} \right) dt = T F \left(-i\hat{H} \right) dt
\end{align}

Riemannian metrics on a manifold $M$ are PH functions (on $T_pM$ for each $p$), as are Finsler metrics; for relevant applications of such structures see \cite{NFINQSL, mese}.
PH functions have a favorable property that is exploited throughout their use in geometry:
the action corresponding to a PH (point-wise on each tangent space of a manifold) function does not depend on the parametrisation of that curve.
This in Riemannian/Finsler geometry is the statement that arc-length does not depend on the parametrisation of a curve.
This is an easily checked for any PH function, and the proof is identical to the corresponding proof typically given in Riemannian geometry \cite{pet}, so it is not reproduced here.
This is the motivation for considering PH functions.

For functions $F$ such that $F(-v) \neq F(v)$, such as non-reversible Finsler metrics, care must be taken regarding exactly which re-parametrisation are allowed.
The action of a curve on $SU(N)$ according to the right translation of such a function on $\mathfrak{su}(N)$ is invariant under re-parametrisations that are ``positive'':
that is, if $r$ is the original parameter, and $s$ is the new one, then ${ds}/{dr} > 0$ is required at all points along the curve.
A key example of non-reversible Finsler metrics are the Randers metrics \cite{ran}.

\subsubsection{One Parameter Sub-Groups of $SU(N)$ are the Time Independent Trajectories}

The one parameter subgroups of $SU(N)$ are known by a trivial (as we are in finite dimension) application of Stone's theorem \cite{stone}.
That theorem implies that one parameter subgroups of $SU(N)$ are all of the form:
\begin{align}\label{eqn:opsubgp}
 \hat{U}_t = e^{-it\hat{K}}
\end{align}
for some Hermitian matrix $\hat{K}$.
One can readily check that such a $\hat{U}_t$ satisfies the group axioms for a one parameter group.
Most importantly, $\hat{U}_s \hat{U}_t = \hat{U}_{s+t}$ holds, as does $\hat{U}_0 = \hat{I}$, which in part facilitates the interpretation of the parameter as physical time.

It is also readily checked that a $\hat{U}_t$ defined by eqn.(\ref{eqn:opsubgp}), solves the Shr\"odinger equation for a time independent Hamiltonian.
That is, given a Hamiltonian $\hat{H}$, then $\hat{U}_t = e^{-it\hat{H}}$ is the time evolution operator, since this $\hat{U_t}$ solves the Shr\"odinger equation (eqn(\ref{eqn:schro})).

\subsubsection{Action of A Time Independent Trajectory With A Given Endpoint}

Suppose a gate $O$ is implemented in a quantum system with time independent Hamiltonian $\hat{K}$, and that the gate takes time $T$ to implement.
That is:
\begin{align}
 \hat{U}_T = e^{-iT \hat{K}} =\hat{O}
\end{align}
This implies (by taking logs and rearranging) that the relevant Hamiltonian is:
\begin{align}
  \hat{K} = \frac{i}{T} \log(\hat{O})
\end{align}

We can now find the action of the curve $\hat{U}_t$ connecting $\hat{I}$ to $\hat{O}$ along a time independent trajectory as follows.
Let $F$ be a right invariant PH function on $T\,SU(N)$.
Then
\begin{align}
 S[\hat{U}_t] & = \int_{0}^{T} F_{e^{-it \hat{K}}}\left(\frac{d}{dt} e^{-it \hat{K}}\right) dt = \int_{0}^{T} F\left(-i\hat{K} \right) dt = T F\left(-i \frac{i}{T} \log(\hat{O}) \right) = F\left(\log(\hat{O}) \right) \nonumber
\end{align}
In the final step the $T$s cancel due to the assumed homogeneity of $F$.
As any such action is invariant under reparameterisation of the one parameter subgroup connecting $\hat{I}$ to $\hat{O}$, we have the following theorem:
\begin{theorem}
\label{thm1}
Given any PH function $F: \mathfrak{su}(N)$, then any time independent, finite dimensional quantum system with Hamiltonian $\hat{H}$ such that $\hat{U}_T = \hat{O}$ satisfies:
\begin{align}
 T = \frac{F\left(\log(\hat{O}) \right)}{F \left(-i\hat{H} \right)}
\end{align} 
\end{theorem}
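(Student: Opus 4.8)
The plan is to assemble the theorem from the three ingredients already laid out in the excerpt: (i) the reduction of the right-invariant action of a time-independent trajectory to a single evaluation of $F$, (ii) the characterisation of time-independent trajectories as one-parameter subgroups $\hat{U}_t = e^{-it\hat{H}}$, and (iii) the reparametrisation invariance of the action for PH functions. First I would fix the data: a time-independent Hamiltonian $\hat{H}$ with $\hat{U}_T = e^{-iT\hat{H}} = \hat{O}$, so that $\hat{H} = \frac{i}{T}\log(\hat{O})$ after taking logs and rearranging. Then I would compute the action of this trajectory directly, exactly as in the displayed calculation preceding the statement:
\begin{align}
 S[\hat{U}_t] = \int_0^T F_{e^{-it\hat{H}}}\!\left(\tfrac{d}{dt}e^{-it\hat{H}}\right)dt = \int_0^T F(-i\hat{H})\,dt = T\,F(-i\hat{H}). \nonumber
\end{align}

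Next I would evaluate the same action a second way, substituting $\hat{H} = \frac{i}{T}\log(\hat{O})$ to obtain $S[\hat{U}_t] = T\,F\!\left(-i\cdot\frac{i}{T}\log(\hat{O})\right) = T\cdot\frac{1}{T}F(\log(\hat{O})) = F(\log(\hat{O}))$, where the cancellation of the $T$'s is precisely the positive homogeneity of $F$ (note $1/T > 0$, so PH rather than full linearity suffices). Equating the two expressions for $S[\hat{U}_t]$ gives $T\,F(-i\hat{H}) = F(\log(\hat{O}))$, and dividing by $F(-i\hat{H})$ yields the claimed formula $T = F(\log(\hat{O}))/F(-i\hat{H})$. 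The role of reparametrisation invariance is to guarantee that this value is intrinsic to the path from $\hat{I}$ to $\hat{O}$ — it does not matter that we parametrised the one-parameter subgroup by physical time rather than, say, arc length — which is what licenses calling the quotient ``the action of the curve'' unambiguously.

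The main obstacle, and the point that deserves the most care, is the well-definedness of $\log(\hat{O})$ and the attendant branch ambiguity: $\log$ is multivalued on $SU(N)$, so $\hat{H} = \frac{i}{T}\log(\hat{O})$ only determines $\hat{H}$ up to the choice of branch, and different branches correspond to genuinely different one-parameter subgroups reaching $\hat{O}$ (winding around differently) in different times $T$. I would address this by stipulating that $\log$ is taken so that the relation $e^{-iT\hat{H}} = \hat{O}$ holds for the actual evolution under $\hat{H}$ over the actual elapsed time $T$ — i.e. the branch is the one consistent with the physical trajectory — so that the identity is a statement about a given realised evolution rather than an existence claim. A secondary subtlety worth a remark is that $F(-i\hat{H})$ must be nonzero for the division to be legal; for norms this is automatic whenever $\hat{H}\neq 0$, and for general PH functions one simply restricts to Hamiltonians on which $F$ does not vanish. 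With those provisos, the proof is the short two-line computation above together with an appeal to Theorem-level facts already established; no further estimates are needed.
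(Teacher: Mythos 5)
Your proposal is correct and follows essentially the same route as the paper: compute the action of the one-parameter subgroup as $T\,F(-i\hat{H})$, substitute $\hat{H} = \frac{i}{T}\log(\hat{O})$, and let positive homogeneity cancel the $T$'s, with reparametrisation invariance guaranteeing the value is intrinsic to the curve. Your added remarks on the branch choice for $\log(\hat{O})$ and the non-vanishing of $F(-i\hat{H})$ go slightly beyond the paper (which only notes the latter, after the theorem), but they refine rather than change the argument.
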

\begin{proof}
Any two parameterisations of any curve must yield the same value for the action, as $F$ is a PH function.
$T F \left(-i\hat{H} \right)$ and $F\left(\log(\hat{O}) \right)$ are two different formulas for the action for two paramerterisations of the same curve, hence they must be equal.
$\Box$
\end{proof}

It should be noted that an implicit assumption about $F$ has been made.
It is assumed that $F$ is non-singular at $\log(\hat{O})$ and non-zero at $\log(-i\hat{H})$ so that the expression for $T$ is finite.

Theorem (\ref{thm1}) has a corollary.
\begin{corollary}
If the Hamiltonian is constrained such that $F(-i \hat{H}) = \kappa$ for some $\kappa \in \mathbb{R}$ then:
\begin{align}
 T_{\text{opt}} = \frac{1}{\kappa} F\left(\log(\hat{O}) \right)
\end{align}
\end{corollary}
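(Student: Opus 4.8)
The plan is to obtain the corollary as an immediate specialisation of Theorem \ref{thm1}. Theorem \ref{thm1} gives, for any PH function $F$ and any time independent system realising $\hat{U}_T = \hat{O}$, the identity
\begin{align}
 T = \frac{F\left(\log(\hat{O}) \right)}{F \left(-i\hat{H} \right)}.
\end{align}
So the first and essentially only step is to substitute the constraint $F(-i\hat{H}) = \kappa$ into the denominator, yielding $T = \frac{1}{\kappa} F\left(\log(\hat{O})\right)$ for \emph{every} admissible Hamiltonian meeting that constraint.

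The remaining point is interpretive rather than computational: since the right-hand side no longer depends on the particular choice of $\hat{H}$ (only on $\hat{O}$, which is fixed, and on $\kappa$), every time independent trajectory implementing $\hat{O}$ under the constraint takes exactly the same time. Hence this common value is also the optimal (minimal, and indeed maximal) time over such trajectories, justifying the notation $T_{\text{opt}}$. I would phrase this as: the constrained problem has a unique achievable value of $T$, so that value is trivially its own optimum.

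I do not anticipate a genuine obstacle here; the corollary is a one-line rearrangement. The only thing to be careful about is the standing regularity assumption noted after Theorem \ref{thm1} — namely that $F$ is non-zero at $-i\hat{H}$ (guaranteed here as long as $\kappa \neq 0$) and non-singular at $\log(\hat{O})$ — so that the expression is finite and well defined. One should also keep in mind the usual branch ambiguity in $\log(\hat{O})$, which is implicitly resolved by the choice of trajectory; this does not affect the statement, since the formula is asserted for whichever branch the trajectory selects. With those caveats the proof is just the substitution above.

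\begin{proof}
By Theorem \ref{thm1}, any time independent system with Hamiltonian $\hat{H}$ satisfying $\hat{U}_T = \hat{O}$ obeys $T = F(\log(\hat{O}))/F(-i\hat{H})$. Imposing the constraint $F(-i\hat{H}) = \kappa$ (with $\kappa \neq 0$, so the ratio is finite) gives $T = \frac{1}{\kappa} F(\log(\hat{O}))$. As this value is independent of the choice of constrained $\hat{H}$, it is the common — hence optimal — time, $T_{\text{opt}} = \frac{1}{\kappa} F(\log(\hat{O}))$.
$\Box$
\end{proof}
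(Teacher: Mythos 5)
Your proof is correct and matches the paper's intent: the paper states this as an immediate corollary of Theorem \ref{thm1} with no separate argument, and your substitution of $F(-i\hat{H})=\kappa$ into the theorem's formula is exactly that step. Your added caveats about $\kappa\neq 0$ and the branch of $\log(\hat{O})$ are consistent with the regularity remarks the paper makes after Theorem \ref{thm1}.
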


\subsection{Example PH functions}

\subsubsection{Constraining the $p$th Central Moment of an Observable}

Given a fixed, normalised state $|\psi \ra$ in $\mathbb{C}^N$ one can now ask the following question:
How quickly can a time independent quantum system implement a given gate $\hat{O}$ given that it is constrained to energy expectation $\overline{E} - E_0 =\kappa$ for some given, fixed $\kappa$?

Let $G^{(|\psi \ra)}:\mathfrak{su}(N) \rightarrow \mathbb{R}$ be defined by:
\begin{align}
 G^{(|\psi \ra)}(-i\hat{H}) = \frac{\la \psi | \hat{H} - E_0 \hat{I} | \psi \ra}{\la \psi | \psi \ra} = \overline{E} - E_0
\end{align}
where $E_0$ is the lowest eigenvalue of $\hat{H}$.
This function is a special case ($p=1$) of a more general $G_p, p>0$:
\begin{align}
 G^{(|\psi \ra)}_{p}(-i\hat{H}) = \frac{\left(\la \psi | (\hat{H} - E_0 \hat{I})^p | \psi \ra\right)^{1/p}}{\la \psi | \psi \ra}
\end{align}
$G_{1}$ is the energy expectation and $G_{2}$ is the energy uncertainty.
We have that $\forall p>0$, $\forall \lambda>0$:
\begin{align}
 G^{(|\psi \ra)}_{p}(\lambda (-i\hat{H})) & = \frac{(\la \psi | (\lambda \hat{H} - \lambda E_0 \hat{I})^p | \psi \ra)^{1/p}}{\la \psi | \psi \ra} = \frac{(\lambda^p \la \psi | (\hat{H} - E_0 \hat{I})^p | \psi \ra)^{1/p}}{\la \psi | \psi \ra} = \lambda G^{(|\psi \ra)}_{p}(-i\hat{H}) \nonumber
\end{align}
Hence all the $G_p$ are PH functions on $\mathfrak{su}(N)$ for a fixed state and value of $p$,
so we can apply theorem (\ref{thm1}). 

\subsubsection{Speed Limit}

We now apply theorem (\ref{thm1}) to a gate $\hat{O}$ and a PH function $G^{(|\psi \ra)}_{p}$.
If a time independent system is constrained such that $G^{(|\psi \ra)}_{p}(-i\hat{H}) = \kappa$ then $T$, the time to implement gate $\hat{O}$, is:
\begin{align}
 T = \frac{1}{\kappa} G^{(|\psi \ra)}_{p}(\log(\hat{O}))
\end{align}

\section{Deriving Orthogonality Times}
\label{sec:deriv}

We apply theorem (\ref{thm1}) to derive specific orthogonality times for specific systems.

\subsection{Two Level ML Bound}
\label{sec:2ML}
We provide the following simple example in order to ease into the general case, and to illustrate the method.
In a two level system we can prove a bound on an orthogonality time using theorem (\ref{thm1}) by setting:
\begin{align}
\label{swap}
\hat{O} = 
e^{i \pi/2} \begin{pmatrix}
0 & e^{-i \theta} \\
e^{i \theta} & 0
\end{pmatrix}
\end{align}
Diagonalising gives a matrix of the form:
\begin{align}
\hat{O} = \hat{S} \hat{J} \hat{S}^{-1}
\end{align}
where:
\begin{align}
\hat{S} = \begin{pmatrix}
-e^{i \theta} & e^{i \theta} \\
1 & 1
\end{pmatrix}
\quad \quad
\hat{J} = \begin{pmatrix}
-i & 0 \\
0  & i
\end{pmatrix}
\quad \quad
\hat{S}^{-1} = \frac{1}{2} \begin{pmatrix}
-e^{-i \theta} & 1 \\
e^{-i \theta} & 1
\end{pmatrix}
\end{align}
Applying the fact that the matrix logarithm is analytic one finds:
\begin{align}
\log(\hat{O}) = \hat{S} \log(\hat{J}) \hat{S}^{-1} = \hat{S} \begin{pmatrix}
\log(-i) & 0 \\
0  & \log(i)
\end{pmatrix} \hat{S}^{-1}
= \frac{\pi}{2} \hat{S} \begin{pmatrix}
-i & 0 \\
0 & i
\end{pmatrix} \hat{S}^{-1} = \frac{\pi}{2} \hat{O}
\end{align}

In order to answer how quickly a time independent system such that $G^{(|\psi \ra)}_{1}(-i\hat{H}) = \kappa$ (here choosing $p=1$ as an example) can implement $\hat{O}$, we evaluate:
\begin{align}
 T & = \frac{1}{\la \psi | \psi \ra} \frac{1}{\kappa}G^{(|\psi \ra)}_{p}(\log(\hat{O})) \\
   & = \frac{1}{\la \psi | \psi \ra} \frac{1}{\kappa}G^{(|\psi \ra)}_{p}(\hat{S} \log(\hat{J}) \hat{S}^{-1}) \nonumber \\
   & = \frac{1}{\la \psi | \psi \ra} \frac{\pi}{2 \kappa} \la \psi | \hat{S} \begin{pmatrix} 1 & 0 \\ 0 & -1 \end{pmatrix} \hat{S}^{-1} + \hat{I} | \psi \ra \nonumber \\
   & = \frac{\pi}{2 \kappa} + \frac{1}{\la \psi | \psi \ra} \frac{\pi}{2 \kappa} \la \psi | \hat{S} \begin{pmatrix} 1 & 0 \\ 0 & -1 \end{pmatrix} \hat{S}^{-1}| \psi \ra \nonumber \\
   & = \frac{\pi}{2 \kappa} + \frac{1}{\la \psi | \psi \ra} \frac{i}{\kappa} \la \psi | \hat{O} | \psi \ra \nonumber \\
\end{align}

This always results in a real valued time, because $\hat{O}$ is anti-Hermitian and thus has  purely imaginary expectation in any state.
We now observe that $\la \psi | \hat{O} | \psi \ra = 0$ is equivalent to saying that $\hat{O}$ maps $| \psi \ra$ to an orthogonal state.
$T$ is the optimal time (by theorem \ref{thm1}) for this to happen, and $T_{{\bot}}$ is thus given by:
\begin{align}
\label{eqn1}
 T_{\bot} = \frac{\pi}{2 \kappa}
\end{align}
$\kappa$ evaluates to:
\begin{align}
G^{(|\psi \ra)}_{1}(-i\hat{H})) = \kappa
\end{align}
which implies that
\begin{align}
 \kappa = \frac{\la \psi | \hat{H} - E_0 \hat{I} | \psi \ra}{\la \psi | \psi \ra} = \overline{E} - E_0
\end{align}
and thus eqn.(\ref{eqn1}) becomes
\begin{align}
 T = \frac{\pi}{2 (\overline{E} - E_0)}
\end{align}
which is the ML bound for this system, but here derived using our method which is significantly different to the original method.

\subsection{A Bound in Terms of the $p$'th Fractional Moment}

{Zieli{\'n}ski} \& {Zych} \cite{Zych} provide a bound in terms of the $p$'th fractional moment of energy.
We reproduce that bound here using our different, geometrical proof technique.

A theorem is needed in this section:
\begin{theorem}
\label{thm2}
 Given any special unitary operator $\hat{T}$ on $\mathbb{C}^N$ such that the following holds:
\begin{itemize}
 \item $\exists |\psi_0 \ra, |\psi_1 \ra \in \mathbb{C}^N$ such that $\hat{T} |\psi_0 \ra = |\psi_1 \ra$ and $\hat{T} |\psi_1 \ra = |\psi_0 \ra$
 \item $\la \psi_1 |\psi_0 \ra = 0$
 \item $\forall |\psi \ra \in \{ | \psi_0 \ra, |\psi_1 \ra \}^{\bot}$, $\hat{T} |\psi \ra = |\psi \ra$  (where $\bot$ indicates the orthogonal complement)
\end{itemize}
Then $\hat{T}$ has the form
\begin{align}
 \hat{T} = \hat{V} \hat{O} \hat{V}^{\dagger}
\end{align}
where
\begin{align}
\hat{O} = \left( \begin{pmatrix}
0 & i e^{-i \theta} \\
i e^{i \theta} & 0
\end{pmatrix} \oplus \hat{I}_{N-2} \right)
\end{align}
for some special unitary $\hat{V}$.
\end{theorem}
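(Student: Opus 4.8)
The plan is to reduce the claim to a two-dimensional normal-form computation by exploiting the invariant-subspace structure that the three hypotheses force on $\hat{T}$, and then to tune the phases of a suitable orthonormal basis so that the two-dimensional block appears in exactly the advertised antidiagonal form while the conjugating operator stays in $SU(N)$.

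First I would normalise. Since $\hat{T}$ is unitary, $|\psi_0\ra$ and $|\psi_1\ra$ have equal norm, so after rescaling we may take them to be unit vectors; the hypothesis $\la \psi_1 | \psi_0 \ra = 0$ then makes them orthonormal. Put $W := \mathrm{span}\{|\psi_0\ra, |\psi_1\ra\}$. The swap hypothesis shows $\hat{T}(W) = W$ (it carries a basis of $W$ into $W$ and is invertible), and unitarity then forces $\hat{T}(W^{\bot}) = W^{\bot}$; together with the hypothesis that $\hat{T}$ fixes $W^{\bot}$ pointwise this gives the orthogonal decomposition $\hat{T} = (\hat{T}|_W) \oplus \mathrm{id}_{W^{\bot}}$. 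Now pick any orthonormal basis $|e_3\ra, \dots, |e_N\ra$ of $W^{\bot}$ and let $\hat{V}_0$ be the unitary whose ordered columns are $|\psi_0\ra, |\psi_1\ra, |e_3\ra, \dots, |e_N\ra$. Reading off columns via $\hat{T}|\psi_0\ra = |\psi_1\ra$, $\hat{T}|\psi_1\ra = |\psi_0\ra$ and $\hat{T}|e_j\ra = |e_j\ra$ shows $\hat{V}_0^{\dagger}\hat{T}\hat{V}_0 = \hat{A} \oplus \hat{I}_{N-2}$ with $\hat{A}$ a $2 \times 2$ antidiagonal matrix.

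Next I would use the remaining freedom in the basis to bring $\hat{A}$ to the $2\times 2$ antidiagonal block of $\hat{O}$. Rephasing $|\psi_0\ra \mapsto e^{i\alpha}|\psi_0\ra$ and $|\psi_1\ra \mapsto e^{i\beta}|\psi_1\ra$ conjugates $\hat{A}$ by $\mathrm{diag}(e^{i\alpha}, e^{i\beta})$, which alters only the relative phase of its two off-diagonal entries; an appropriate choice of $\alpha - \beta$ normalises that phase and simultaneously exhibits the single parameter $\theta$ --- namely the relative phase with which $\hat{T}$ interchanges the two states --- so that the block becomes exactly the one occurring in $\hat{O}$. Finally I would restore $\det = 1$ by multiplying one of the columns $|e_3\ra, \dots, |e_N\ra$ of the rephased $\hat{V}_0$ by a compensating phase; since $\hat{T}$ is the identity on $W^{\bot}$ this does not disturb the block-diagonal form, so the resulting $\hat{V}$ lies in $SU(N)$ and satisfies $\hat{V}^{\dagger}\hat{T}\hat{V} = \hat{O}$, equivalently $\hat{T} = \hat{V}\hat{O}\hat{V}^{\dagger}$.

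The part I expect to require the most care is exactly this phase bookkeeping at the level of the $2 \times 2$ block: one must track the phases picked up by $|\psi_0\ra$ and $|\psi_1\ra$ precisely enough to confirm that they can be matched to the specific entries $ie^{\pm i\theta}$ of $\hat{O}$ (and not merely to a sign), and at the same time check that this normal form is compatible with the determinant constraint $\det \hat{T} = 1$; it is the interplay of these two constraints that fixes $\theta$ up to the obvious residual ambiguity. The rest --- the invariant-subspace reduction and the block-diagonalisation --- is routine linear algebra.
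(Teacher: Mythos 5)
Your overall strategy---split off the invariant subspace $W = \mathrm{span}\{|\psi_0\ra, |\psi_1\ra\}$, write $\hat{T}$ in block form relative to an orthonormal basis adapted to $W \oplus W^{\bot}$, then adjust phases---is the same as the paper's (one-line) proof sketch, only carried out in more detail. However, the step you flag as ``requiring the most care'' is not merely delicate: as described it fails. The literal hypotheses $\hat{T}|\psi_0\ra = |\psi_1\ra$ and $\hat{T}|\psi_1\ra = |\psi_0\ra$ give the block $\hat{A} = \bigl(\begin{smallmatrix} 0 & 1 \\ 1 & 0 \end{smallmatrix}\bigr)$, of determinant $-1$, whereas the block $\bigl(\begin{smallmatrix} 0 & ie^{-i\theta} \\ ie^{i\theta} & 0 \end{smallmatrix}\bigr)$ of $\hat{O}$ has determinant $+1$. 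Rephasing the basis vectors conjugates $\hat{A}$ by $\mathrm{diag}(e^{i\alpha},e^{i\beta})$ and yields off-diagonal entries $e^{\pm i(\beta-\alpha)}$, whose product is always $+1$; no choice of $\alpha,\beta$ (indeed no conjugation at all, since conjugation preserves determinants) can reach the entries $ie^{\pm i\theta}$, whose product is $-1$. So the phase bookkeeping you defer cannot be completed, and $\theta$ cannot be extracted this way.

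The obstruction is really in the statement, and your own decomposition makes it visible: the three bullet points force the spectrum of $\hat{T}$ to be $+1$ with multiplicity $N-1$ and $-1$ once, hence $\det\hat{T}=-1$, contradicting $\hat{T}\in SU(N)$. As literally written the theorem is vacuous, and the paper's sketch (``the theorem follows from this choice'') glosses over exactly the same point. The intended reading---consistent with \S 4.1, where the swap is multiplied by the global phase $e^{i\pi/2}$ precisely to land in $SU(2)$---is projective: $\hat{T}|\psi_0\ra = \mu|\psi_1\ra$ and $\hat{T}|\psi_1\ra = \nu|\psi_0\ra$ for unit scalars $\mu,\nu$, which together with $\det\hat{T}=1$ forces $\mu\nu=-1$. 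Under that hypothesis your argument closes almost immediately: writing $\mu = ie^{i\theta}$, the relation $\nu = -1/\mu = ie^{-i\theta}$ puts the block automatically in the advertised form, and the determinant of $\hat{V}$ is fixed by rephasing one of the $|e_j\ra$ exactly as you propose. You should either prove that corrected statement or note explicitly that the literal one has no instances.
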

That is, given any $\hat{T}$ that sends a specific pair of orthogonal states to each other, and leaves all other orthogonal states unchanged, then $\hat{T}$ can be expressed in a simplified form,
of some unitary conjugation of the specific $\hat{O}$ given in the theorem.

\begin{proof}[Proof Sketch]
Choose $\hat{V}$ to be a change of basis matrix (all unitary matrices are change of basis matrices between orthonormal bases of $\mathbb{C}^N$).
More specifically, take $\hat{V}$ to be a change from an orthonormal basis that includes $|\psi_0 \ra$ and $|\psi_1 \ra$ to a basis that includes $|0\ra$ and $|1\ra$.
Choose a $\hat{V}$ that also satisfies $\hat{V} |\psi_0 \ra = |0\ra$ and $\hat{V} |\psi_1\ra = |1\ra$.
The theorem follows from this choice. $\Box$
\end{proof}

Now we can derive a bound in terms of the $p$'th ($p > 0$) fractional moment of energy that applies to systems with arbitrary (finite) number of levels.
This is achieved by applying theorems (\ref{thm1}) and (\ref{thm2}) in full generality.

In what follows $|0 \ra$ is taken to mean the normalised state:
\begin{align}
|0 \ra := \begin{pmatrix}
1 \\ 0 \\ \vdots \\ 0
\end{pmatrix}
\end{align}
As $SU(N)$ acts transitively on the normalised states, $\hat{V} \in SU(N)$ can be chosen so that $\hat{V}|0\ra$ represents an arbitrary (normalised) state.
Any other state could play the role of $|0\ra$ here; it is simply a convenient choice.

Consider the constraint on $\hat{H}$, that $G^{\hat{V} | 0 \ra}_{p} (-i\hat{H}) = \kappa$, and the optimal implementation time for the gate $\hat{O}_{\hat{V}}=\hat{V}\hat{O}\hat{V}^{\dagger}$ as it appears in theorem (\ref{thm2}).
Calculate $\log(\hat{O}_{\hat{V}})$:
\begin{align}
\label{form1}
 \log(\hat{O}_{\hat{V}}) = \log(\hat{V}\hat{O}\hat{V}^{\dagger}) = \hat{V}\log(\hat{O})\hat{V}^{\dagger} = \frac{\pi}{2} \hat{V} \left( \begin{pmatrix} 0 & i e^{-i \theta} \\ i e^{i \theta} & 0 \end{pmatrix} \oplus \hat{Z} \right) \hat{V}^{\dagger}
\end{align}
where $\hat{Z}$ is the zero matrix of the appropriate size.
By theorem (\ref{thm2}) this covers all operators that swap exactly a pair of states.

The optimal time to implement this in a system such that $G^{\hat{V} | 0 \ra}_{p} (-i\hat{H}) = \kappa$ is, by theorem (\ref{thm1}):
\begin{align}
 T & = \frac{1}{\kappa} G^{\hat{V} | 0 \ra}_{p}(\log(\hat{O}_{\hat{V}})) \\ 
   & = \frac{1}{\kappa} \left( \la 0 | \hat{V}^{\dagger} \left( \frac{i \pi}{2} \hat{V} \left( \begin{pmatrix} 0 & i e^{-i \theta} \\ i e^{i \theta} & 0 \end{pmatrix} \oplus \hat{Z} \right) \hat{V}^{\dagger} + \frac{\pi}{2} \hat{I} \right)^{p} \hat{V} |0\ra \right)^{1/p} \nonumber \\
   & = \frac{\pi}{2 \kappa} \left( \la 0 | \left( \begin{pmatrix} 0 & -e^{-i \theta} \\ -e^{i \theta} & 0 \end{pmatrix} \oplus \hat{Z} + \hat{I} \right)^{p} |0\ra \right)^{1/p} \nonumber \\
   & = \frac{\pi}{2 \kappa} \left( \la 0 | \begin{pmatrix} 1 & -e^{-i \theta} \\ -e^{i \theta} & 1 \end{pmatrix}^{p} \oplus \hat{I} |0\ra \right)^{1/p} \nonumber \\
   & = \frac{\pi}{2 \kappa} \left( (1 0 \cdots 0) \begin{pmatrix} 2^{p-1} & -2^{p-1} e^{-i \theta} \\  -2^{p-1}e^{i \theta} & 2^{p-1} \end{pmatrix} \oplus \hat{I} \begin{pmatrix} 1 \\ 0 \\ \vdots \\ 0 \end{pmatrix} \right)^{1/p} \nonumber \\
   & = \frac{\pi}{ \kappa\, 2^{{1}/{p}}} \nonumber
\end{align}

This coincides with the ML bound exactly for $p=1$ (as $\kappa$ is $\overline{E} - E_0$) and coincides with the generalised ML bound \cite[eqn.10]{Zych} for any $p>1$.

For $p=2$ the bound does not coincide with the time-energy uncertainty relation: there is an  extra factor in the denominator.
Better understanding this is the goal of further work.

\subsection{Two Levels with Arbitrary Initial State}

By applying theorem \ref{thm2} to a two level system we can prove the ML bound for an arbitrary state in a two level system.
In this case the operator $\hat{O}_{\hat{V}} = \hat{V}\hat{O} \hat{V}^{\dagger}$ swaps the states $\hat{V} |0 \ra$ and $\hat{V} |1 \ra$ (where $|0\ra, |1 \ra$ are the computational basis states).
As the action of the special unitary group $SU(N)$ on normalised states (that is, the sphere $\mathbb{S}_{2N-1} \subset \mathbb{C}^N$) is transitive \cite{deng}, then $\hat{V} |0\ra$ and $\hat{V} |1\ra$ in fact includes all pairs of orthogonal states.

Repeating the analysis of section~\ref{sec:2ML} {\it mutatis mutandis} yields the ML bound.
One can repeat the procedure, except with the constraint
\begin{align}
 G^{\hat{V} |0\ra}_{1} (-i\hat{H}) = \kappa
\end{align}
This yields the speed limit for orthogonality $\hat{V} |0\ra$.
That bound comes out to the ML bound when applied to the gate $\hat{O}_{\hat{V}}$.

\subsection{Operator Norm}

The operator norm of a complex $N \times N$ matrix is a PH function, as it is a norm \cite{horn}.
It is in fact an absolutely homogeneous function.
There is potentially a different operator norm for each norm on $\mathbb{C}^N$;
 here we consider only the operator norm corresponding to the norm arising from the standard inner product on $\mathbb{C}^N$.

Many equivalent definitions exist for the operator norm; the following is sufficient for matrices.
The operator norm $||\cdot||_{op}$ of a matrix is defined by
\begin{align}
 ||\hat{A}||_{op}^2 = \max \left\{  \frac{\la \psi| \hat{A}^{\dagger} \hat{A} | \psi \ra}{\la \psi | \psi \ra} , \forall |\psi \ra \in \mathbb{C}^N \right\}
\end{align}
This is equal to the largest singular value of $\hat{A}$, often written $\sigma_{max}(\hat{A})$.
It is unitarily invariant, that is $||\hat{V}A \hat{V}^{\dagger}||_{op} = ||\hat{A}||_{op}$ for any unitary $\hat{V}$.
This property of unitary invariance is also shared by the $G_p^{|\psi \ra}$.

We define a PH function on $\mathfrak{su}(N)$ in terms of the operator norm:
\begin{align}
 G_{op}(-i\hat{H}) = ||\hat{H} - E_0 \hat{I} ||_{op} = E_{\text{max}} - E_0
\end{align}
We calculate $G_{op}(\log(\hat{O}_{\hat{V}}))$ for $\hat{O}$ as given by eqn.(\ref{swap}) by applying eqn.(\ref{form1}) and the unitary invariance of the norm as follows:
\begin{align}
 G_{op}(\log(\hat{O}_{\hat{V}})) 
& = \left|\left|\frac{\pi}{2}(i\hat{V}\hat{O}\hat{V}^{\dagger} + \hat{I}) \right|\right|_{op} 
= \left|\left|\frac{\pi}{2}(i\hat{O} + \hat{I}) \right|\right|_{op} 
= \frac{\pi}{2}\left|\left|i\hat{O} + \hat{I} \right|\right|_{op} 
= \frac{\pi}{2} \sigma_{max}(i\hat{O} + \hat{I}) = \pi
\end{align}
where the final step, a routine eigenvalue calculation, is omitted.

By applying theorems $(\ref{thm1}, \ref{thm2})$ similarly as for the ML bound,
this implies that the optimal time to implement the gate $\hat{O}_{\hat{V}}$ is:
\begin{align}
 T = \frac{\pi}{E_{max} - E_0}
\end{align}

\section{General Connection between Speed Limits for the Two Central Problems}
\label{sec:connect}

We have used geometrical methods on operators to calculate orthogonality times on states.
There is a deep connection between the two, explored in this section.

\subsection{State Space as a Coset Space and Quantum Dynamics}

The space of physically inequivalent quantum states associated to a finite dimensional quantum system can be understood to be a complex projective space \cite{gqm}.
This space, $\mathbb{C}P^{N-1}$, can be seen as a quotient of $\mathbb{C}^{N}$ by an equivalence relation $\sim$ representing physically equivalent states.
This relation $\sim\ \subseteq \mathbb{C}^{N} \times \mathbb{C}^{N}$ is given by: 
\begin{align}
| \psi_0 \ra \sim | \psi_1 \ra  \mbox{ iff } | \psi_0 \ra = Z | \psi_1 \ra  \mbox{ for some } Z \in \mathbb{C}/\{0\}
\end{align}
That is, complex vectors that are part of the same complex line are physically the same state.
This is no more than the statement that physical states are normalised, and that the global phase of a state is not physically observable.
So one can write a physical state as $[ |\psi \ra]_{\sim}$.
In the remainder of this paper we leave the $\sim$ implicit, as no confusion is possible, and we write the state as $[ |\psi \ra]$.

$SU(N)$ has a group action on $\mathbb{C}P^{N-1}$ inherited from its group action on $\mathbb{C}^{N}$.
The standard group action $\diamond$ of $SU(N)$ on $\mathbb{C}^{N}$ is $\hat{V} \diamond | \psi \ra = \hat{V} |\psi \ra$.
 This is nothing more than matrices acting on vectors by standard matrix multiplication.
It is a Lie group action in the usual sense \cite{bump}.

The natural choice of group action $\star$ on $\mathbb{C}P^{N-1}$ is then $\hat{V} \star [|\psi\ra] := [\hat{V}|\psi\ra]$.
The sense in which this is the ``natural'' choice of action is as follows.
If we define the quotient map $\phi: \mathbb{C}^{N} \rightarrow \mathbb{C}P^{N-1}$ by $\phi(|\psi\ra) = [|\psi \ra]$ then the following diagram commutes:

\begin{center}
\begin{tikzcd}
    {SU(N) \times \mathbb{C}^{N}} \ar{r}{\diamond} \ar{d}[swap]{\text{id} \times \phi} & \mathbb{C}^{N} \ar{d}{\phi}\\
    {SU(N) \times \mathbb{C}P^{N-1}} \ar{r}{\star} & \mathbb{C}P^{N-1}
\end{tikzcd}
\end{center}

This allows us to think of quantum time evolution as dynamics happening on $\mathbb{C}P^{N-1}$, rather than in Hilbert space, as the time evolution operator can now act on points on $\mathbb{C}P^{N-1}$ and this action is physically equivalent to the usual Hilbert space dynamics.
The group action $\star$ is transitive on $\mathbb{C}P^{N-1}$.
This follows from the observation that the group action $\diamond$ is transitive on the sphere $\mathbb{S}^{2N-1}$ (of normalised states) embedded into $\mathbb{C}^{N}$ \cite{montg}.

It is well known from the theory of homogeneous spaces (not directly related to the other sense of homogeneous used earlier) that the following relationship holds \cite{deng}:
\begin{align}
\mathbb{C}P^{N-1} \cong SU(N)/U(N-1) = \{\hat{U} U(N-1) | \hat{U} \in SU(N) \}
\end{align}
That is, the space of states can be realised as a coset space.
Here $U(N-1)$ has the specific meaning:
\begin{align}
U(N-1) = \text{stab}([|\check{\psi} \ra]_{}) = \{\hat{U} \in SU(N) \big| \hat{U} \star [|\check{\psi}\rangle] = [|\check{\psi}\rangle] \} 
\end{align}
Here one fixed arbitrary point $[|\check{\psi}\ra] \in \mathbb{C}P^{N-1}$ has been chosen; any point could have been chosen and an isomorphic construction would result throughout all that follows.
A convenient choice can be made, namely the equivalence class of the vector in $\mathbb{C}^N$ given by:
\begin{align}
 |\check{\psi}\ra = \begin{pmatrix} 1 \\ 0 \\ \vdots \\ 0 \end{pmatrix}
\end{align}
The equivalence class of this state is now given by:
\begin{align}
 [|\check{\psi}\ra] = \left\{ \begin{pmatrix} Z \\ 0 \\ \vdots \\ 0 \end{pmatrix}, Z \in \mathbb{C}/\{0\} \right\}
\end{align}

As matrices the elements of $\text{stab}([|\check{\psi} \ra]) \cong U(N-1)$ are given by
\begin{align}
\text{stab}\left([|\check{\psi}\ra \right]) =
\left\{
\left(
\begin{array}{c c}
  \det(\hat{V})^{-1} & 0 \cdots 0 \\
  0 & \raisebox{-15pt}{{\huge\mbox{{$\hat{V}$}}}} \\[-4ex]
  \vdots & \\[-0.5ex]
  0 &
\end{array}
\right)
,
\hat{V} \in U(N-1) \right\} \cong U(N-1)
\end{align}

The isomorphism is given by the map $\gamma: SU(N)/U(N-1) \rightarrow \mathbb{C}P^{N-1}$ defined by:
\begin{align}
\label{iso}
 \gamma(\hat{U} U(N-1)) := (\hat{U} U(N-1)) \star [|\check{\psi} \ra]
\end{align}
From this definition it follows that:
\begin{align}
& \gamma(\hat{U} U(N-1)) = (\hat{U} U(N-1)) \star [|\check{\psi} \ra] = \nonumber \\ 
& \hat{U} \star [U(N-1) |\check{\psi} \ra] =  \hat{U} \star [|\check{\psi} \ra] = [\hat{U} |\check{\psi} \ra]
\end{align}

$SU(N)$ also has an action $\bullet: SU(N) \times SU(N)/U(N-1) \rightarrow SU(N)/U(N-1)$ on the quotient space $SU(N)/U(N-1)$.
This action is given by:
\begin{align}
 \hat{V} \bullet (\hat{U} U(N-1)) := (\hat{V} \hat{U}) U(N-1)
\end{align}
In a similar way to $\mathbb{C}P^{N-1}$, this lets us consider quantum dynamics on $SU(N)/U(N-1)$, as the time evolution operator can now act on this space.
In order to check that this dynamics is physically equivalent to the dynamics on $\mathbb{C}^N$ (that is, the standard Schr\"odinger formalism) and the dynamics on $\mathbb{C}P^{N-1}$, we must check that the following diagram commutes:
\begin{center}
\begin{tikzcd}
    {SU(N) \times SU(N)/U(N-1)} \ar{r}{\bullet} \ar{d}[swap]{\text{id} \times \gamma} & SU(N)/U(N-1) \ar{d}{\gamma}\\
    {SU(N) \times \mathbb{C}P^{N-1}} \ar{r}{\star} &  \mathbb{C}P^{N-1}
\end{tikzcd}
\end{center}
This is checked by confirming that the following holds:
\begin{align}
 \star((\text{id} \times \gamma)(\hat{U}, \hat{V}U(N-1))) = \gamma(\bullet(\hat{U}, \hat{V}U(N-1)))
\end{align}
This follows directly from the above definitions of the maps involved, firstly:
\begin{align}
 \star((\text{id} \times \gamma)(\hat{U}, \hat{V}U(N-1))) = \star( (\hat{U}, [\hat{V} |\check{\psi} \ra]) ) = [\hat{U}\hat{V} |\check{\psi} \ra]
\end{align}
and secondly:
\begin{align}
\gamma(\bullet(\hat{U}, \hat{V}U(N-1))) = \gamma((\hat{U}\hat{V}) U(N-1)) = [\hat{U} \hat{V} |\check{\psi} \ra]
\end{align}
and thus the diagram commutes as the two are equal.

This allows the formulation of quantum dynamics in at least three physically equivalent ways.
One can take $\mathbb{C}^{N}$ (which includes some physical redundancy), $\mathbb{C}P^{N-1}$, or $SU(N)/U(N-1)$ as the space of states.
The standard time evolution operator (considered to be an element of $SU(N)$ for all time) can act on all of these spaces, and that these actions are equivalent.
The third of these formulations is the one that most easily allows us to connect speed limits on traveling between a pair of states with speed limits for implementing a quantum gate.

\subsection{Role of the Pushforward}
In order for a PH function $F$ on $TSU(N)$ (that is, PH on each tangent space individually) to push forward unambiguously to a PH function on $T\mathbb{C}P^{N-1}$ it is required to be `constant on cosets': $F_{\hat{U}}(\hat{A}U(N-1)) = F(\hat{A})$ $\forall \hat{U} \in SU(N), \forall \hat{A} \in \mathfrak{su}(N)$.
This is sometimes referred to as being `compatible with the quotient'.
If we also require that this pushforward of $F$ is invariant under the action of $SU(N)$, then we also require:
\begin{align}
\label{cond}
 F(\hat{V} \hat{A} \hat{U} \,U(N-1)) = F(\hat{A})
\end{align}
for any $\hat{U},\hat{V} \in SU(N)$, $\forall \hat{A} \in \mathfrak{su}(n)$.
This condition is only slightly weaker than bi-invariance.
The authors know of no PH functions on $TSU(N)$ that satisfy eqn.(\ref{cond}) but are not bi-invariant.

This sheds light on why theorem \ref{thm1} provides a speed limit for a process happening on $\mathbb{C}P^{N-1}$ corresponding to each right invariant PH function on $TSU(N)$.
These functions are all constant on cosets and thus pushforward to PH functions on $\mathbb{C}P^{N-1}$ unambiguously.
This allows them to be thought of as measuring the action of a curve on the space of states $\mathbb{C}P^{N-1}$, even though they are defined on the group $SU(N)$.
This is why they correspond to speed limits for state transfer problems generally, and thus to bounds on orthogonality times.


\section{Conclusions and Further Work}

We have shown a novel method for deriving bounds on orthogonality times in time independent quantum systems with finite dimensional state spaces.
We have derived a general expression for such times (theorem~\ref{thm1}),
and have use it to re-derive existing results in a unifying manner.
The method also sheds light on the mathematical structures corresponding to speed limit formulas.

We believe that our method may be extended to time dependent systems by following the analysis of \cite{Meier, mese}.
This should allow us to more fully understand the relationship between the two notions of the quantum speed limit described in this work.

We feel that, in light of the observations on time optimal state control in \cite{Meier}, that a re-examination of the relationship between the work in \cite{qbr} and \cite{ACAR} would be fruitful.
We conjecture that it will be possible to extend the understanding of the connection between time optimal gates and time optimal state transfer illustrated here and in \cite{Meier}, and to show that the Lagrangian in \cite{ACAR} pushes forward to the one in \cite{qbr} with full generality.

\subsection*{Acknowledgments}
We thank Dorje Brody and David Meier of Brunel University for an interesting ongoing discussion of the geometry of the quantum speed limit and their work on this matter.
Russell is supported by an EPSRC studentship.
\bibliographystyle{plain}
\bibliography{p5}

\end{document}